\newtheorem{theorem}{Theorem}
\newtheorem{lemma}{Lemma}
\newcommand{\cgate}[1]{*+<.6em>{#1} \POS ="i","i"+UR;"i"+UL **\dir{-};"i"+DL **\dir{-};"i"+DR **\dir{-};"i"+UR **\dir{-},"i"}
\DeclareFontFamily{OT1}{pzc}{}
\DeclareFontShape{OT1}{pzc}{m}{it}{<-> s * [1.150] pzcmi7t}{}
\DeclareMathAlphabet{\mathpzc}{OT1}{pzc}{m}{it}
\DeclareMathOperator{\tr}{Tr}
\newcommand{\bb}{\mathbb}
\newcommand{\mc}{\mathcal}
\newcommand{\bra}[1]{\langle #1 |}  
\newcommand{\ket}[1]{| #1 \rangle}
\newcommand{\pr}{\ensuremath{\mathrm{Pr}}}
\def\ketbra#1{\def\tempa{#1}\futurelet\next\ketbra@i}
\def\ketbra@i{\ifx\next\bgroup\expandafter\ketbra@ii\else\expandafter\ketbra@end\fi}
\def\ketbra@ii#1{\left| \tempa \middle\rangle\!\middle\langle #1 \right|}
\def\ketbra@end{\left| \tempa \middle\rangle\!\middle\langle \tempa \right|}
\def\braket#1{\def\tempa{#1}\futurelet\next\braket@i}
\def\braket@i{\ifx\next\bgroup\expandafter\braket@ii\else\expandafter\braket@end\fi}
\def\braket@ii#1{\left\langle \tempa \middle| #1 \right\rangle}
\def\braket@end{\left\langle \tempa \middle| \tempa \right\rangle}
\newcommand{\dbra}[1]{\ensuremath{\left\langle\!\left\langle #1\right|\right.}}
\newcommand{\dket}[1]{\ensuremath{\left.\left|#1\right\rangle\!\right\rangle}}
\newcommand{\cketbra}[1]{\ensuremath{\left|#1\right)\!\left(#1\right|}}
\def\dketbra#1{\def\tempa{#1}\futurelet\next\dketbra@i}
\def\dketbra@i{\ifx\next\bgroup\expandafter\dketbra@ii\else\expandafter\dketbra@end\fi}
\def\dketbra@ii#1{| \tempa \rangle\!\rangle\!\langle\!\langle #1 |}
\def\dketbra@end{| \tempa \rangle\!\rangle\!\langle\!\langle \tempa |}
\def\dbraket#1{\def\tempa{#1}\futurelet\next\dbraket@i}
\def\dbraket@i{\ifx\next\bgroup\expandafter\dbraket@ii\else\expandafter\dbraket@end\fi}
\def\dbraket@ii#1{\langle\!\langle \tempa | #1 \rangle\!\rangle}
\def\dbraket@end{\langle\!\langle \tempa | \tempa \rangle\!\rangle}
\def\cketbra#1{\def\tempa{#1}\futurelet\next\cketbra@i}
\def\cketbra@i{\ifx\next\bgroup\expandafter\cketbra@ii\else\expandafter\cketbra@end\fi}
\def\cketbra@ii#1{\left| \tempa \middle)\!( #1 \right|}
\def\cketbra@end{\left| \tempa \middle)\!\middle( \tempa \right|}
\def\cbraket#1{\def\tempa{#1}\futurelet\next\cbraket@i}
\def\cbraket@i{\ifx\next\bgroup\expandafter\cbraket@ii\else\expandafter\cbraket@end\fi}
\def\cbraket@ii#1{\left( \tempa \middle| #1 \right)}
\def\cbraket@end{\left( \tempa \middle| \tempa \right)}
\begin{document}

\title{Benchmarking Quantum Instruments}
\author{Darian McLaren}
\affiliation{Institute for Quantum Computing, University of Waterloo, Waterloo, Ontario N2L 3G1, Canada}
\affiliation{Department of Applied Mathematics, University of Waterloo, Waterloo, Ontario N2L 3G1, Canada}
\author{Matthew A.\ Graydon}
\affiliation{Institute for Quantum Computing, University of Waterloo, Waterloo, Ontario N2L 3G1, Canada}
\affiliation{Department of Applied Mathematics, University of Waterloo, Waterloo, Ontario N2L 3G1, Canada}
\author{Ali Mahmoud}
\affiliation{National Research Council of Canada, Ottawa, Ontario  K1K 4P7, Canada}
\author{Joel J. Wallman}
\affiliation{Institute for Quantum Computing, University of Waterloo, Waterloo, Ontario N2L 3G1, Canada}
\affiliation{Department of Applied Mathematics, University of Waterloo, Waterloo, Ontario N2L 3G1, Canada}

\begin{abstract}
    Quantum measurements with feed-forward are crucial components of fault-tolerant quantum computers.
    We show how the error rate of such a measurement can be directly estimated by fitting the probability that successive randomly compiled measurements all return the ideal outcome.
    Unlike conventional randomized benchmarking experiments and alternative measurement characterization protocols, all the data can be obtained using a single sufficiently large number of successive measurements.
    We also prove that generalized Pauli fidelities are invariant under randomized compiling and can be combined with the error rate to characterize the underlying errors up to a gauge transformation that introduces an ambiguity between errors happening before or after measurements.
\end{abstract}

\maketitle

Quantum instruments (\textit{i.e}., nondestructive measurements with classical feed-forward~\cite{QI1}) are vital for detecting and correcting errors during quantum computation and communication. However, quantum instruments are also subject to noise, and so introduce errors into otherwise fault-tolerant schemes.
Therefore quantum instruments need to be characterized and any errors must be corrected to reliably process quantum information.

Historically, quantum characterization has primarily focused on quantum gates~\cite{eisert2020quantum}.
While complete characterizations are in principle possible using techniques such as process~\cite{Chuang1997} and gate-set tomography \cite{Merkel2013}, the complexity of quantum error processes renders such characterizations untenable for even moderately sized quantum systems.
These techniques can be generalized to characterize quantum instruments~\cite{QIQND,Rudinger2022,Pereira2022,pereira2023}, but still suffer the same fundamental scaling issue.

An alternative approach is to partially characterize errors by estimating figures of merit.
The canonical example of this approach is randomized benchmarking~\cite{magesan2011scalable}, wherein a unitary 2-design~\cite{Dankert2009} is used to ``twirl'' an arbitrary error channel into a global depolarizing channel with the same process fidelity.
The depolarizing channel is repeated a variable number of times and the process fidelity is estimated by fitting the survival probability to an exponential curve.
The primary drawback of this approach is that many entangling operations are needed to produce a global depolarizing channel, so that the error rate is only loosely connected to the performance of individual gates in a general circuit.

This drawback was overcome by the combination of cycle benchmarking~\cite{erhard2019characterizing} and randomized compiling~\cite{Wallman2016a}, which use local twirling operations to produce stochastic channels~\cite{mclaren2023stochastic}.
Unlike global depolarizing channels, stochastic error channels are described by a probability distribution and the action of the error depends on the input state.
When the twirling operations are Pauli operations (or Weyl operations for qudits), preparations and measurements in local Pauli eigenstates can be used to isolate portions of the state space in which the action of the error channel is described by a single parameter that can be amplified and estimated by fitting to an exponential decay.
These parameters are referred to as Pauli fidelities, and their average is exactly the process fidelity, which determines the error probability in randomly compiled circuits.
Moreover, additional processing of these Pauli fidelities can be performed to estimate the probabilities of specific errors~\cite{flammia2020}.

Randomized compiling and cycle benchmarking have both recently been generalized to quantum instruments~\cite{beale2023randomized,zhang2024,hines2024}.
These schemes are designed to characterize syndrome measurements and thus allow nontrivial circuits to entangle measured and unmeasured qubits.
Though useful, allowing entangling circuits and a subsystem of unmeasured qubits introduces significant complexity as: not all generalized Pauli fidelities are learnable~\cite{zhang2024};
the circuits depend on the Pauli fidelities to be learned; and benchmarking the unmeasured qubits increases the number of experiments.

In this paper, we present a simplified and more efficient protocol for characterizing instruments in the absence of other gates, and unmeasured qudits.
Specifically, we prove that the probability of obtaining the ideal outcome from the first $m$ of $k$ randomly compiled measurements decays exponentially in $k$, where, to first order, the base is the error rate.
We then show how additional post-processing can be used to learn more detailed properties of the noisy measurement.
As an example, we show that single-qubit instruments can be characterized up to a gauge ambiguity between errors occurring before or after the measurement.

The balance of this paper is structured as follows. In \cref{sec:prelims}, we define the multi-qudit Weyl operators and review how effective projectors can be introduced in a circuit. In \cref{sec:randomizedCompiling}, we prove that the generalized Pauli infidelities~\cite{zhang2024} are invariant under randomized compiling and completely characterize the randomly compiled instrument. In \cref{sec:diamondDistance}, we present our protocol for characterizing a noisy quantum instrument assuming no idling qudits along with a simulated example of its execution. We conclude with \cref{sec:discussion}. We also prove a bound on deviation from an ideal rank 1 projector in \cref{thm:ApproximateProjector} that may be of independent interest.

\section{PRELIMINARIES}
\label{sec:prelims}

We briefly set out notation and define the $n$-qudit Weyl operators and channels.
We will use the convention that operators acting on pure states will be denoted using capital Roman font (e.g., $X$, $Z$), whilst channels will be denoted using capital calligraphic letters (e.g., $\mc{M}$, $\mc{X}$, $\mc{Z}$).
For any operator $A$ acting on pure states, we use the same letter in calligraphic font to denote the corresponding channel that acts by conjugation, i.e., $\mc{A}(\rho) = A \rho A^\dagger$.

Let $d$ be a positive integer and $\oplus$ denote addition modulo $d$.
The characters of $\bb{Z}_d^n$ are the functions
\begin{align}
    \chi_z : \bb{Z}_d^n \to \bb{C} :: \chi_z(j) = \exp\left(\frac{2 \pi i z \cdot j}{d}\right),
\end{align}
for $z \in \bb{Z}_d^n$.
These functions satisfy Schur's orthogonality relations,
\begin{align}\label{eq:Schur}
    d^{-n} \sum_z \chi_j^*(z) \chi_k(z) = \delta_{j,k}.
\end{align}
Note that we will frequently use the identities 
\begin{align}
    \chi_a(b) = \chi^*_a(-b) = \chi^*_{-a}(b) = \chi_b(a).
\end{align}
The generalized single-qudit $X$ and $Z$ operators are
\begin{align}
    X= \sum_{j \in \bb{Z}_d} \ketbra{j\oplus 1}{j}, \quad Z = \sum_{j \in \bb{Z}_d} \chi_1(j) \ketbra{j}.
\end{align}
We use the vectorized power notation
\begin{align}
    A^a = \otimes_{i\in\bb{Z}_n} A^{a_i}
\end{align}
for an operator $A$ and a vector $a \in \bb{Z}_d^n$ to describe tensor products of powers of the $X$ and $Z$ operators, so that for $x, z \in \bb{Z}_d^n$,
\begin{align}\label{eq:ZToj}
    X^x = \sum_{j \in \bb{Z}_d^n} \ketbra{j\oplus x}{j}, \quad Z^z = \sum_{j \in \bb{Z}_d^n} \chi_z(j) \ketbra{j}.
\end{align}
We thus have the braiding relation
\begin{align}\label{eq:braiding}
    Z^z X^x = \chi_z(x) X^x Z^z
\end{align}
and, from \cref{eq:Schur}, the inversion formula
\begin{align}\label{eq:jToZ}
    \ketbra{j} = d^{-n} \sum_{z\in \bb{Z}_d^n} \chi_j^*(z) Z^z.
\end{align}
The $n$-qudit projective Weyl group is the set 
\begin{align}
    \{Z^z X^x: x, z \in \bb{Z}_d^n\}.
\end{align}

Quantum operations are completely positive trace non-increasing maps acting between the spaces of positive semi-definite operators. 
We will use the vectorization map 
\begin{align}
    \dket{A} = \sum_{a,b \in \bb{Z}_d^n} \bra{b} A \ket{a} \bigotimes_{j \in \bb{Z}_n} \ket{a_j b_j},
\end{align}
so that $\dket{A\otimes B} = \dket{A} \otimes \dket{B}$ and $\dbraket{A}{B} = \tr A^\dagger B$.
For clarity, we will use $\dket{k} = \dket{\ketbra{k}}$. 
If we wish to restrict to considering (normalized) quantum states we define a quantum channel to be any completely positive trace preserving map between spaces of density operators.
Of vital importance are the Weyl channels $\mc{X}^x$ and $\mc{Z}^z$, which satisfy
\begin{align}\label{eq:WeylChannels}
    \mc{X}^x &= d^{-n} \sum_{a, b} \chi^*_x(a)\dketbra{Z^a X^b}  \notag\\
    \mc{Z}^z &= d^{-n} \sum_{a, b} \chi_z(b) \dketbra{Z^a X^b}.
\end{align}
We can then construct effective projectors by inserting random Weyl operators, since by \cref{eq:Schur},
\begin{align}\label{eq:dephasing}
     \sum_x \chi_x(c) \mc{X}^x &=  \sum_b \dketbra{Z^c X^b} \notag\\
     \sum_z \mc{Z}^z &=  \sum_a \dketbra{Z^a}.
\end{align}
(Note that \cite[Lemma 1]{beale2023randomized} incorrectly differs from a normalization factor as the $d^{-n}$ normalization factor in \cref{eq:WeylChannels} was omitted.)

\section{Randomized compiling for quantum instruments}
\label{sec:randomizedCompiling}
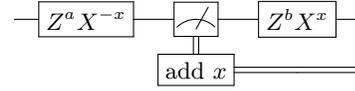
\begin{figure}[t!]
\begin{align*}
\Qcircuit @C=1em @R=.7em {
& \gate{Z^a X^{-x}} & \meter\cwx[1]  & \gate{Z^b X^x} & \qw  \\
& & \cgate{\mbox{add } x}  & \cw & \cw
}
\end{align*}
\caption{A randomly compiled $n$-qudit computational basis measurement where $a, b, x \in \bb{Z}_d^n$ are chosen uniformly at random. See \cite[Fig. 3]{beale2023randomized} for modifications for non-trivial gates on other unmeasured qudits.}
\label{fig:simplifiedMeasRC}
\end{figure}

We briefly review randomized compiling for quantum instruments as shown in \cref{fig:simplifiedMeasRC} and prove that the quantities learned in the protocols of \cite{zhang2024,hines2024} are invariant under randomized compiling.
For simplicity we will focus on the case where all qudits are measured.
A quantum instrument is an $n$-qudit quantum channel of the form
\begin{align}\label{eq:quantumInstrument}
    \mc{M} = \sum_{k \in \bb{Z}_d^n} \mc{M}_k \otimes \dket{k},
\end{align}
where $k$ represents the observed measurement outcome, and the maps $\mc{M}_k$ are trace non-increasing and sum to a trace-preserving channel. 
The generalized Pauli fidelities of an instrument~\cite{zhang2024} are defined to be
\begin{align}\label{eq:NuDef}
    \tilde{\nu}_{s, t}(\mc{M}) = d^{-n} \sum_{k} \chi^*_k(s-t) \mc{M}_{k, s, t}
\end{align}
where we define
\begin{align}\label{eq:MuKST}
    \mc{M}_{k, s, t} = \dbra{Z^t} \mc{M}_k \dket{Z^s}.
\end{align}
For an ideal instrument, $\mc{M}_k = \dketbra{k}$ and so for small errors we have $\mc{M}_{k,s,t} \approx \chi_k(s-t)$ and so $\tilde{\nu}_{s, t}(\mc{M}) \approx 1$ by \cref{eq:Schur}.
For a quantum instrument, we have
\begin{align}\label{eq:TracePreservingInstrument}
    \tilde{\nu}_{0,0}(\mc{M}) 
    = d^{-n} \tr (\sum_k \mc{M}_k)(I) 
    = 1.
\end{align}
As we now prove, the generalized Pauli fidelities of an instrument are invariant under randomized compiling with ideal single-qudit gates. Note that as is standard, we can accommodate gate-independent errors on the single-qubit gates by folding errors from the gates into the instrument, however, gate-dependent errors require a perturbative treatment.

\begin{theorem}
\label{thm:invariantGPFs}
    The generalized Pauli fidelities in \cref{eq:NuDef} are invariant under randomized compiling with ideal gates.
\end{theorem}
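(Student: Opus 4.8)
The plan is to compute the generalized Pauli fidelities of the randomly compiled instrument directly and show they coincide with those of the original $\mc{M}$. First I would write the compiled instrument explicitly. Reading off \cref{fig:simplifiedMeasRC}, conjugating the measurement by the random pre-rotation $Z^a X^{-x}$ and post-rotation $Z^b X^x$, together with the classical relabelling ``add $x$'', produces the components
\begin{align*}
    \tilde{\mc{M}}_k = d^{-3n}\sum_{a,b,x} (\mc{Z}^b\mc{X}^x)\,\mc{M}_{k-x}\,(\mc{Z}^a\mc{X}^{-x}),
\end{align*}
where the shift $k-x$ in the subscript encodes the feed-forward that adds $x$ to the raw outcome, and the uniform averages over $a,b,x\in\bb{Z}_d^n$ supply the factor $d^{-3n}$.

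The key technical step is to evaluate the matrix elements $\tilde{\mc{M}}_{k,s,t} = \dbra{Z^t}\tilde{\mc{M}}_k\dket{Z^s}$, using that $Z^s$ and $Z^t$ are eigenoperators (up to phase) of conjugation by the random Weyl operators. Concretely, the braiding relation \cref{eq:braiding} gives $X^{-x} Z^s X^x = \chi_s(x) Z^s$, and since $Z^a$ commutes with $Z^s$ the pre-rotation contributes only a phase, $(\mc{Z}^a\mc{X}^{-x})\dket{Z^s} = \chi_s(x)\dket{Z^s}$, independent of $a$. The analogous calculation for the post-rotation gives $\dbra{Z^t}(\mc{Z}^b\mc{X}^x) = \chi_t(-x)\dbra{Z^t}$, independent of $b$. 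The sums over $a$ and $b$ are therefore trivial and yield $d^{2n}$, while combining the two phases via $\chi_t(-x)\chi_s(x)=\chi_{s-t}(x)$ collapses the expression to a convolution in the outcome label,
\begin{align*}
    \tilde{\mc{M}}_{k,s,t} = d^{-n}\sum_x \chi_{s-t}(x)\,\mc{M}_{k-x,s,t}.
\end{align*}

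Finally I would substitute this into the definition \cref{eq:NuDef}, change variables $k\mapsto k-x$, and use the identity $\chi_a(b)=\chi_b(a)$ to see that the factor $\chi^*_k(s-t)$ splits off a term $\chi^*_x(s-t)=\chi_{s-t}(x)^*$ that cancels the $\chi_{s-t}(x)$ above. The $x$-summand then becomes independent of $x$, so the sum over $x\in\bb{Z}_d^n$ produces $d^n$, cancelling one factor of $d^{-n}$ and returning exactly $\tilde{\nu}_{s,t}(\mc{M})$.

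I expect the main obstacle to be bookkeeping rather than anything deep: correctly modelling the classical feed-forward as the outcome shift $k\mapsto k-x$, and tracking the phases and normalizations so that the random $Z^a$ and $Z^b$ byproducts provably drop out. The conceptual crux, and what makes the statement true, is that the diagonal operators $Z^s,Z^t$ are merely rephased under the random twirl, so randomized compiling acts on the fidelity data only through a convolution in $k$ that the defining character sum immediately inverts.
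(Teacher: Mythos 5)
Your proposal is correct and takes essentially the same approach as the paper: both model the feed-forward as the outcome shift $k \mapsto k-x$, exploit the fact that $Z^s$ and $Z^t$ are eigenoperators of the random Weyl conjugations (so $a,b$ drop out and only the phase $\chi_{s-t}(x)$ survives), and cancel that phase via the relabeling $k \mapsto k-x$ in the defining character sum. The only cosmetic difference is that you average over $a,b,x$ before computing, whereas the paper computes the fidelity for each fixed $a,b,x$ and observes it is independent of them.
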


\begin{proof}
    From \cref{fig:simplifiedMeasRC}, our goal is to prove that $\tilde{\nu}_{s, t}\left([\mc{X}^x]^{\otimes 2} \mc{Z}^a \mc{M} \mc{Z}^b \mc{X}^{-x}\right)$ is independent of $a, b, x$.
    Due to the shifting of the observed outcome, we have
    \begin{align}\label{eq:XconjugatedInstrument}
    (\mc{X}^x)^{\otimes 2} &\mc{Z}^a \mc{M} \mc{Z}^b \mc{X}^{-x} \notag\\
    =& \sum_{k} \mc{X}^x\mc{Z}^a \mc{M}_{k} \mc{Z}^b \mc{X}^{-x} \otimes \dket{k+x} \notag\\
    =& \sum_{k} \mc{X}^x\mc{Z}^a \mc{M}_{k-x} \mc{Z}^b \mc{X}^{-x} \otimes \dket{k}.
    \end{align}
    Thus we have to shift the index $k$ to $k - x$ when substituting into \cref{eq:MuKST}, giving
    \begin{align}
    \tilde{\nu}_{s, t}&\left([\mc{X}^x]^{\otimes 2} \mc{Z}^a \mc{M} \mc{Z}^b \mc{X}^{-x}\right) \notag\\
    &= d^{-n} \sum_{k} \chi^*_k(s-t) \dbra{Z^t}\mc{X}^x \mc{Z}^a \mc{M}_{k-x} \mc{Z}^b \mc{X}^{-x} \dket{Z^s} \notag\\
    &= d^{-n} \sum_{k} \chi^*_k(s-t) \chi^*_{-x}(s - t)\dbra{Z^t} \mc{M}_{k-x} \dket{Z^s} \notag\\
    &= d^{-n} \sum_{k} \chi^*_k(s-t) \dbra{Z^t} \mc{M}_k \dket{Z^s},
    \end{align}
    which is independent of $a, b, x$ as required.
    In the above, we have used \cref{eq:WeylChannels} to obtain the second equality and then relabeled the terms in the sum to obtain the final equality.
\end{proof}

Having established that the generalized Pauli fidelities are invariant under randomized compiling, we now show that they completely determine the randomly compiled instrument, which is a subclass of uniform stochastic instruments~\cite{beale2023randomized} where all qudits are measured.

\begin{theorem}\label{thm:RCI}
    Randomized compiling with ideal gates maps an instrument $\mc{M}$ to
    \begin{align}\label{eq:RCZ}
        \hat{\mc{M}} = d^{-2n}\sum_{k, s, t} \chi_k(s-t) \tilde{\nu}_{s, t} \dketbra{Z^t}{Z^s} \otimes \dket{k}.
    \end{align}
    Equivalently,
     \begin{align}\label{eq:RCState}
        \hat{\mc{M}} = \sum_{k, a, b} \nu_{a, b} \dketbra{k+b}{k+a} \otimes \dket{k}
    \end{align}
    where
    \begin{align}\label{eq:ErrorRates}
        \nu_{a, b} = d^{-2n} \sum_{s, t} \tilde{\nu}_{s, t} \chi^*_s(a) \chi_t(b).        
    \end{align}
\end{theorem}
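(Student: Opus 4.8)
The plan is to realise $\hat{\mc M}$ as the uniform Weyl twirl
\[
\hat{\mc M} = d^{-3n}\sum_{a,b,x}[\mc X^x]^{\otimes 2}\,\mc Z^a\,\mc M\,\mc Z^b\,\mc X^{-x},
\]
and to evaluate the three averages in sequence. I would start from \cref{eq:XconjugatedInstrument}, which already performs the outcome relabelling, so that the problem reduces to computing, for each outcome $k$, the averaged map $\hat{\mc M}_k = d^{-3n}\sum_{a,b,x}\mc X^x\mc Z^a\mc M_{k-x}\mc Z^b\mc X^{-x}$ and then reassembling $\hat{\mc M}=\sum_k\hat{\mc M}_k\otimes\dket{k}$.

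First I would carry out the $a$ and $b$ averages. By \cref{eq:dephasing}, $\mc D:=d^{-n}\sum_p\dketbra{Z^p}=d^{-n}\sum_a\mc Z^a$ is the projector onto diagonal operators, and it satisfies $\mc D^2=\mc D$ since $\dbraket{Z^p}{Z^{p'}}=d^n\delta_{p,p'}$. The $a,b$ average therefore sandwiches $\mc M_{k-x}$ between two copies of $\mc D$, giving $\mc D\,\mc M_{k-x}\,\mc D = d^{-2n}\sum_{s,t}\mc M_{k-x,s,t}\dketbra{Z^t}{Z^s}$, whose coefficients are exactly the matrix elements of \cref{eq:MuKST}. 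Next I would conjugate by $\mc X^x$ and average over $x$. The braiding relation \cref{eq:braiding} yields $X^x Z^t X^{-x}=\chi^*_t(x)Z^t$, hence $\mc X^x\dket{Z^t}=\chi^*_t(x)\dket{Z^t}$ and $\dbra{Z^s}\mc X^{-x}=\chi_s(x)\dbra{Z^s}$, so each dyad $\dketbra{Z^t}{Z^s}$ is an eigenvector that merely acquires the phase $\chi^*_t(x)\chi_s(x)=\chi_{s-t}(x)$. Collecting the three averages leaves $\hat{\mc M}_k = d^{-3n}\sum_{x,s,t}\chi_{s-t}(x)\,\mc M_{k-x,s,t}\dketbra{Z^t}{Z^s}$.

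The key step is then to substitute $k'=k-x$ and factor the phase as $\chi_{s-t}(x)=\chi_{s-t}(k)\chi^*_{s-t}(k')=\chi_k(s-t)\chi^*_{k'}(s-t)$; the residual sum $\sum_{k'}\chi^*_{k'}(s-t)\mc M_{k',s,t}$ is precisely $d^n\tilde{\nu}_{s,t}$ by the definition \cref{eq:NuDef}, which produces \cref{eq:RCZ}. For the equivalent form I would pass from the $\{Z^s,Z^t\}$ basis to the $\{\ketbra{k+a},\ketbra{k+b}\}$ basis using the inversion formula \cref{eq:jToZ}, which gives $\dket{k+b}=d^{-n}\sum_t\chi^*_{k+b}(t)\dket{Z^t}$ and $\dbra{k+a}=d^{-n}\sum_s\chi_{k+a}(s)\dbra{Z^s}$. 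Substituting the ansatz \cref{eq:RCState} and matching the coefficient of $\dketbra{Z^t}{Z^s}\otimes\dket{k}$ against \cref{eq:RCZ} reduces the claim, after splitting $\chi_{k+b}^*(t)\chi_{k+a}(s)=\chi_k(s-t)\chi_a(s)\chi^*_b(t)$, to the Fourier identity $\sum_{a,b}\nu_{a,b}\chi_a(s)\chi^*_b(t)=\tilde{\nu}_{s,t}$; inserting \cref{eq:ErrorRates} and collapsing the $a$ and $b$ sums with Schur orthogonality \cref{eq:Schur} verifies it.

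I expect the only real obstacle to be bookkeeping: keeping the character conjugations and argument signs consistent through the shift $k\mapsto k-x$ and the braiding phases, since a single misplaced conjugate would spoil the identification of $\tilde{\nu}_{s,t}$. The conceptual content---that the $\mc Z$-twirl dephases onto the diagonal sector and the $\mc X$-twirl diagonalises the surviving superoperator in the $\dketbra{Z^t}{Z^s}$ basis, with \cref{eq:NuDef} reappearing as the $x$-average---is otherwise routine.
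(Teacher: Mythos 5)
Your proposal is correct and follows essentially the same route as the paper's proof: average the two $\mc{Z}$ twirls via \cref{eq:dephasing} to project $\mc{M}_{k-x}$ onto the $\dketbra{Z^t}{Z^s}$ dyads, then average the $\mc{X}^x$ conjugation together with the outcome shift $k \mapsto k-x$ so that the residual sum reproduces $\tilde{\nu}_{s,t}$ from \cref{eq:NuDef}, and finally convert to the computational-basis form with \cref{eq:jToZ}. Your explicit Schur-orthogonality check of the Fourier identity merely spells out what the paper compresses into ``use \cref{eq:ZToj},'' so there is no substantive difference.
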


\begin{proof}
By \cref{eq:dephasing}, we have
\begin{align}
\mc{M}'_k
&= d^{-2n} \sum_{s, t}\mc{Z}^t \mc{M}_k \mc{Z}^s \notag\\
&= d^{-2n} \sum_{s, t}\mc{M}_{k, s, t} \dketbra{Z^t}{Z^s}.
\end{align}
Therefore averaging \cref{eq:XconjugatedInstrument} over $x, a, b \in \bb{Z}_d^n$ gives
\begin{align}\label{eq:RCInstrument}
    \hat{\mc{M}} &= d^{-3n} \sum_{a, b, x} (\mc{X}^x)^{\otimes 2} \mc{Z}^a \mc{M} \mc{Z}^b \mc{X}^{-x} \notag\\
    &= d^{-3n} \sum_{k, a, b, x} \mc{X}^x\mc{Z}^a \mc{M}_{k-x} \mc{Z}^b \mc{X}^{-x} \otimes \dket{k} \notag\\
    &= d^{-3n} \sum_{k, s, t, x} \mc{M}_{k-x, s, t} \mc{X}^x \dketbra{Z^t}{Z^s} \mc{X}^{-x} \otimes \dket{k} \notag\\
    &= d^{-2n} \sum_{k, s, t} \chi_k(s-t) \tilde{\nu}_{s, t} \dketbra{Z^t}{Z^s} \otimes \dket{k}
    \end{align}
as claimed.
To obtain the equivalent expression, we simply use \cref{eq:ZToj}.
\end{proof}

The $\nu_{a, b}$ are a probability distribution over $\bb{Z}_d^{2n}$~\cite{beale2023randomized}.
Note that $\dketbra{k+b}{k+a}$ corresponds to reporting the outcome $k$ when the system is actually in the state $k+a$, and then leaving the system in the state $k+b$, which we refer to as a register shift. 

Randomized compiling thus has two advantages.
First, it maps a noisy implementation of an ideal instrument into a simpler form with stochastic errors that are independent of the observed measurement outcome.
Secondly, and consequently, the diamond distance between a randomly compiled instrument $\hat{\mc{M}}$ and its ideal version $\mc{M}_{\rm id}$ is simply the probability $\epsilon$ of an error~\cite{mclaren2023stochastic}
\begin{align}\label{eq:diamondDistance}
    \frac{1}{2}\|\hat{\mc{M}} - \mc{M}_{\rm id}\|_\diamond = 1 - \nu_{0, 0} = \epsilon.
\end{align}

\section{Protocol for Characterizing Mid-circuit measurements}\label{sec:diamondDistance}
We now propose a protocol to characterize a noisy implementation of an ideal measurement.
Our protocol is based on the following subroutine for a fixed positive integer $m$ (the sequence length), closely resembling randomized benchmarking.
We describe the algorithm in the single-shot setting (i.e., choosing the gates independently for each experiment), although the generalization to the many-shot setting is straightforward.

\begin{algorithm}[H]
\begin{algorithmic}[1]
    \STATE Prepare the state $\ket{0}^{\otimes n}$.
    \STATE Set $\alpha_0 = 0$ 
    \STATE For each $i = 1, \ldots, m$
     \begin{enumerate}
        \item[a.] Choose $\alpha_i, \beta_i\in \bb{Z}_d^n$ uniformly at random.
        \item[b.] Apply the compiled operation $Z^{\beta_i} X^{\alpha_{i-1} - \alpha_i}$ to the measured qudits
        \item[c.] Measure the system in the computational basis and set $k_i = \alpha_i + o_i$ where $o_i$ is the observed outcome and $k_i$ is the de-randomized outcome. 
    \end{enumerate}
    \STATE Return the de-randomized outcomes $\vec{k}$.
\end{algorithmic}
\caption{Instrument benchmarking routine}\label{algo:mainProtocol}
\end{algorithm}

\begin{figure}[ht]
  \includegraphics[width=1\linewidth]{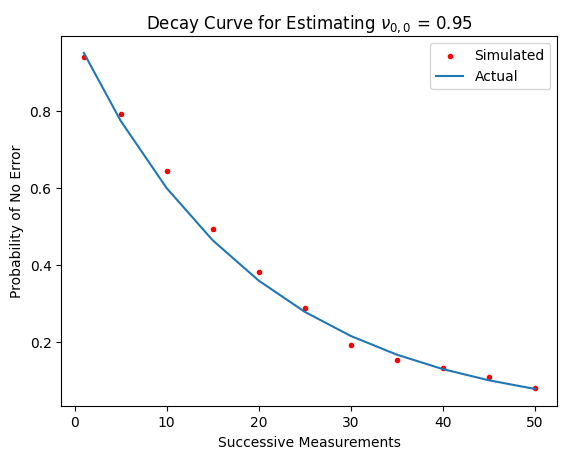}
  \caption{Decay curve for a simulated experiment of \cref{algo:mainProtocol} to estimate the theoretical value of $\nu_{0,0}=0.95$ when $\nu_{1,1}=0.05$ and all other entries of $N$ are zero.
  We numerically simulated \cref{algo:mainProtocol} with $m=50$ a total of 250 times and used the resulting measurement statistics to estimate the probability in \cref{thm:AllZeros}. The data points for smaller values of $m$ are obtained by marginalizing subsequent measurement outcomes. The resulting estimate was $\nu_{0,0}\approx 0.952$.}
  \label{fig:QPlot}
\end{figure}

\begin{lemma}\label{thm:probK}
Under ideal gates, the probability that \cref{algo:mainProtocol} returns $\vec{k}$ is
\begin{align*}
    \pr(\vec{k}) = \tr \hat{\mc{M}}_{\vec{k}}(\rho),
\end{align*}
where we define the ordered product
\begin{align*}
    \hat{\mc{M}}_{\vec{k}} = \hat{\mc{M}}_{k_m} \ldots \hat{\mc{M}}_{k_1}.
\end{align*}
\end{lemma}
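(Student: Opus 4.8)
The plan is to evaluate $\pr(\vec{k})$ directly from the Born rule for fixed random choices, treat the de-randomization as a reindexing of the raw outcomes, and then perform the averages over the twirling parameters so that each measurement together with its neighbouring twirls collapses onto the randomly compiled component $\hat{\mc{M}}_k$ of \cref{thm:RCI}.

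First I would propagate \cref{algo:mainProtocol} forward with ideal gates. For fixed $\vec{\alpha},\vec{\beta}$ (and $\alpha_0=0$, $\rho=\ketbra{0}^{\otimes n}$), the unnormalized state conditioned on the raw outcomes $\vec{o}$ is the ordered product $\mc{M}_{o_m}\mc{Z}^{\beta_m}\mc{X}^{\alpha_{m-1}-\alpha_m}\cdots\mc{M}_{o_1}\mc{Z}^{\beta_1}\mc{X}^{-\alpha_1}(\rho)$, whose trace is $\pr(\vec{o}\mid\vec{\alpha},\vec{\beta})$. Because the algorithm records $k_i=\alpha_i+o_i$, the event $\{\vec{k}\}$ is the event $o_i=k_i-\alpha_i$, so averaging over the uniformly random gates gives
\[
\pr(\vec{k}) = d^{-2nm}\!\!\sum_{\vec{\alpha},\vec{\beta}}\!\tr\!\Big[\mc{M}_{k_m-\alpha_m}\mc{Z}^{\beta_m}\mc{X}^{\alpha_{m-1}-\alpha_m}\cdots\mc{M}_{k_1-\alpha_1}\mc{Z}^{\beta_1}\mc{X}^{-\alpha_1}(\rho)\Big].
\]

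The crux is to reorganize this integrand so that each index acts locally. Two facts make this possible: the Weyl channels commute as superoperators, since the braiding phase of \cref{eq:braiding} cancels under conjugation ($\mc{X}^a\mc{Z}^b=\mc{Z}^b\mc{X}^a$); and $\mc{X}^{\alpha_m}$ may be inserted at the far left for free, as it is trace preserving and the outermost operation is the trace. I would then split each inter-measurement factor as $\mc{X}^{\alpha_{i-1}-\alpha_i}=\mc{X}^{-\alpha_i}\mc{X}^{\alpha_{i-1}}$ and commute $\mc{X}^{-\alpha_i}$ leftward past $\mc{Z}^{\beta_i}$. After this rearrangement every measurement appears wrapped as $\mc{X}^{\alpha_i}\mc{M}_{k_i-\alpha_i}\mc{X}^{-\alpha_i}$, with $\alpha_i$ appearing nowhere else, and a single $\mc{Z}^{\beta_i}$ sits in each gap between consecutive wrapped measurements (with $\mc{Z}^{\beta_1}$ acting on $\rho$).

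Finally I would carry out the averages. Averaging the wrapped measurement over $\alpha_i$ yields $\mc{W}_{k_i}:=d^{-n}\sum_\alpha\mc{X}^\alpha\mc{M}_{k_i-\alpha}\mc{X}^{-\alpha}$, and averaging over $\beta_i$ yields the diagonal-projection channel $\mc{D}:=d^{-n}\sum_z\mc{Z}^z=d^{-n}\sum_a\dketbra{Z^a}$ of \cref{eq:dephasing}. Since $\mc{D}$ commutes with the $\mc{X}$ channels, the intermediate form of $\hat{\mc{M}}$ in \cref{eq:RCInstrument} immediately gives $\hat{\mc{M}}_k=\mc{D}\,\mc{W}_k\,\mc{D}$. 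The remaining step is pure bookkeeping of the projections: because $\mc{D}$ is idempotent, the lone $\mc{D}$ in each gap doubles as the output projection of one measurement and the input projection of the next; and because $\mc{D}(\rho)=\rho$ and $\tr\circ\mc{D}=\tr$, the two end projections may be supplied for free. Collecting the factors into sandwiches $\mc{D}\mc{W}_{k_i}\mc{D}=\hat{\mc{M}}_{k_i}$ then gives $\pr(\vec{k})=\tr[\hat{\mc{M}}_{k_m}\cdots\hat{\mc{M}}_{k_1}(\rho)]=\tr\hat{\mc{M}}_{\vec{k}}(\rho)$. I expect the reorganization of the third paragraph to be the main obstacle, as it requires tracking the coupled $X$-twirl indices $\alpha_{i-1}-\alpha_i$ and confirming that a single $Z$-twirl per gap suffices; the commutativity of the Weyl channels and the idempotence of $\mc{D}$ are precisely what let it go through.
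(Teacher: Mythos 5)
Your proposal is correct and is essentially a fully worked-out version of the paper's own proof, which compresses the entire argument into the single assertion that one can ``fold the ideal gates into the noisy instrument and average over them'' so that the channel conditioned on outcomes $\vec{k}$ is $\hat{\mc{M}}_{\vec{k}}$. The details you supply---splitting $\mc{X}^{\alpha_{i-1}-\alpha_i}$ so each measurement is wrapped by its own $X$-twirl, factorizing the coupled averages, and using idempotence of the dephasing projector together with $\mc{D}(\rho)=\rho$ and $\tr\circ\mc{D}=\tr$ to show a single $Z$-twirl per gap reproduces $\hat{\mc{M}}_{k_i}=\mc{D}\mc{W}_{k_i}\mc{D}$---are exactly the steps the paper leaves implicit, and they check out.
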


\begin{proof}
    Let $\mc{M}$ be the noisy implementation of the measurement and $\hat{\mc{M}}$ be the randomly compiled version of $\mc{M}$.
    As we are assuming ideal gates, we can fold the ideal gates into the noisy instrument and average over them, so that the channel applied to the system when the outcomes $\vec{k}$ are observed is $\hat{\mc{M}}_{\vec{k}}$.
\end{proof}

From the expression in \cref{thm:probK} and \cref{thm:RCI}, we can perform post-processing in two ways to try and learn either the $\tilde{\nu}_{s, t}$ or the $\nu_{a, b}$.
\Cref{thm:RCI} gives two different expressions for $\hat{\mc{M}}_0$ corresponding to two different bases.

Taking $N$ to be the matrix such that $N_{b, a} = \nu_{a, b}$ will give a natural way of estimating  $\nu_{0,0}$, and thus the error probability $\epsilon$ by fitting an exponential curve to the probability of getting the all-zero vector.
We numerically illustrate this technique for a single-qubit system in \cref{fig:QPlot}. 

Unlike a standard randomized benchmarking experiment, one does not need to perform different experiments for each sequence length, rather, the results for different sequence lengths can be obtained from a single sequence length by marginalizing later measurements.
This optimization is not possible in the protocols outlined in from ~\cite{zhang2024,hines2024,hothem2024} as they require terminating measurements on the idling qudits.

\begin{theorem}\label{thm:AllZeros}
    Under ideal gates and provided $\epsilon < 1/3$, the probability that \cref{algo:mainProtocol} returns $\vec{0}$ is
\begin{align*}
    \pr(\vec{0}) = A\mu^m + O\left(\binom{m}{\xi}\epsilon^{m-\xi}\right),
\end{align*}    
where $\mu = 1 - \epsilon + O(\epsilon^2)$, $A$ is a constant, and $\xi + 1$ is the dimension of the largest Jordan block of $N$.
\end{theorem}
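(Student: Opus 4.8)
The plan is to collapse the probability into a single matrix power and then read off its asymptotics from the spectrum of $N$. By \cref{thm:probK} the relevant channel is the ordered product $\hat{\mc{M}}_{\vec{0}} = \hat{\mc{M}}_0^m$, and the second form of \cref{thm:RCI} gives $\hat{\mc{M}}_0 = \sum_{a,b}\nu_{a,b}\dketbra{b}{a}$. Since the diagonal vectors $\dket{a}$ are orthonormal and $\hat{\mc{M}}_0\dket{a} = \sum_b N_{b,a}\dket{b}$, the operator $\hat{\mc{M}}_0$ acts on $\mathrm{span}\{\dket{a}\}$ exactly as the matrix $N$. The initial state obeys $\dket{\rho} = \dket{0}$, which lies in this invariant subspace, so iterating and taking the trace gives the bilinear form
\begin{align*}
    \pr(\vec{0}) = \sum_b (N^m)_{b,0},
\end{align*}
and the whole problem reduces to the large-$m$ behaviour of $N^m$.

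Next I would localise the spectrum of $N$. Writing $N = (1-\epsilon)\,e_0 e_0^{\mathsf T} + E$, the matrix $E \geq 0$ gathers the remaining rates and satisfies $\sum_{a,b}E_{b,a} = \epsilon$ and $E_{0,0}=0$, because $\epsilon = 1 - \nu_{0,0}$. The key point is that every row $b \neq 0$ of $N$ is built solely from rates $\nu_{a,b}$ with $b \neq 0$, so its Gershgorin disc lies in $\{|z|\leq\epsilon\}$, whereas the row-$0$ disc lies in $\{|z-(1-\epsilon)|\leq\epsilon\}$. The hypothesis $\epsilon < 1/3$ is precisely the condition $1-2\epsilon > \epsilon$ that makes these two regions disjoint, so $N$ has a single isolated eigenvalue $\mu$ near $1-\epsilon$ while every other eigenvalue satisfies $|\lambda|\leq\epsilon$. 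Since it is isolated and $N$ is real, $\mu$ is real; and first-order perturbation theory about the eigenpair $(e_0,e_0^{\mathsf T})$ gives a vanishing linear correction $e_0^{\mathsf T} E e_0 = E_{0,0} = 0$, hence $\mu = 1 - \epsilon + O(\epsilon^2)$.

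Finally I would split the power along the spectral gap. Let $P$ be the rank-one spectral projector for $\mu$ and $\tilde R = N(I-P)$; then $NP = \mu P$ forces $P\tilde R = \tilde R P = 0$, so $N^m = \mu^m P + \tilde R^m$ and
\begin{align*}
    \pr(\vec{0}) = A\mu^m + \sum_b (\tilde R^m)_{b,0}, \qquad A = \sum_b P_{b,0},
\end{align*}
with $A$ independent of $m$. The spectrum of $\tilde R$ consists of $0$ together with the non-dominant eigenvalues of $N$, all of modulus at most $\epsilon$, and its largest Jordan block has size $\xi+1$, equal to that of $N$ because $\mu$ contributes only a trivial block. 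Passing $\tilde R$ to Jordan form then bounds each entry of $\tilde R^m$ by $\binom{m}{\xi}\epsilon^{m-\xi}$ up to a constant, which is the claimed remainder.

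The main obstacle is the spectral step. One must confirm that Gershgorin localisation genuinely isolates the dominant eigenvalue at the stated threshold $\epsilon<1/3$, and --- crucially --- that no subdominant eigenvalue exceeds $\epsilon$ in modulus, so that $|\lambda|^{m-\xi}\leq\epsilon^{m-\xi}$ holds with an absolute constant rather than a base that itself grows with $m$. Attributing the polynomial prefactor $\binom{m}{\xi}$ to the largest Jordan block of $N$ also relies on the observation that it is the degeneracies among the small eigenvalues, not the simple dominant one, that control the subleading growth.
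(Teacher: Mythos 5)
Your proposal is correct and follows the same overall architecture as the paper's proof: reduce $\pr(\vec{0})$ to a power of the nonnegative matrix $N$ via \cref{thm:probK} and \cref{thm:RCI}, isolate a unique dominant eigenvalue $\mu$ with all other eigenvalues of modulus at most $\epsilon$, split $N^m$ into a rank-one part plus a remainder, and bound the remainder by its Jordan structure. The genuine divergence is in the key quantitative step $\mu = 1-\epsilon+O(\epsilon^2)$. The paper obtains this from \cref{thm:ApproximateProjector}, which supplements the Gershgorin isolation argument (your second paragraph reproduces exactly its first half, including the threshold $1-2\epsilon>\epsilon$) with Brauer's ovals of Cassini (\cref{thm:Brauer}); the Cassini ovals shrink the dominant disc by the factor $r_i(N)/k_{0,i}(N)$ and give the explicit, non-asymptotic bound $|\mu-\nu_{0,0}|\leq\epsilon^2/(1-2\epsilon)$. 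You instead invoke first-order eigenvalue perturbation theory around $(1-\epsilon)e_0e_0^{\mathsf T}$, using that the linear correction $e_0^{\mathsf T}Ee_0=E_{0,0}$ vanishes. This route is sound but, as stated, is a sketch: to convert it into an honest $O(\epsilon^2)$ you must bound the second-order remainder uniformly, e.g.\ via the resolvent series, which converges here because the unperturbed spectral gap is $1-\epsilon\geq 2/3$ while $\|E\|\leq\epsilon<1/3$; this yields $|\mu-(1-\epsilon)|\leq C\epsilon^2/(1-\epsilon)$ with an absolute constant. The paper's Cassini argument buys exactly that constant, explicitly and with purely elementary tools, which is why it is packaged as a standalone appendix theorem. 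In the other direction, your splitting step is actually cleaner than the paper's: you use the true (oblique) spectral projector $P$, so that $\tilde R=N(I-P)$ satisfies $P\tilde R=\tilde RP=0$ and $N^m=\mu^m P+\tilde R^m$ exactly, whereas the paper's identity $N^m=\mu^m vv^\dagger+SJ^mS^{-1}$ implicitly requires the unit right eigenvector $v$ to also be a left eigenvector, which need not hold for non-normal $N$; your version, together with the remark that $\mu$ contributes only a trivial Jordan block so that $\xi$ is governed by the subdominant eigenvalues, is the correct formulation of that step.
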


\begin{proof}
From \cref{thm:probK} and using the identity $\tr A^\dagger B = \dbraket{A}{B}$, we have
\begin{align}\label{eq:AllZeros}
    \pr(\vec{0}) = \dbra{I} \hat{\mc{M}}_0^m \dket{\rho}.
\end{align}
Let $\{e_a:a\in \bb{Z}_d^n\}$ be the standard orthonormal basis of $\bb{C}^{d^n}$ and
\begin{align}
    \vec{\rho} = \sum_a \dbraket{a}{\rho} e_a \notag\\
    \vec{I} = \sum_a \dbraket{a}{I} e_a.
\end{align}
Then \cref{eq:AllZeros} can be written as
\begin{align}
    \pr(\vec{0}) = \vec{I}^{\dagger} N^m \vec{\rho}.
\end{align}
By \cref{thm:ApproximateProjector}, $N$ has a unique maximal eigenvalue $\mu = \nu_{0,0} + O(\epsilon^2)$ with eigenvector $v$, while all other eigenvalues have modulus less than $\epsilon$.
Now let $SJS^{-1}$ be the Jordan decomposition of $N - \mu v v^\dagger$ and note that
\begin{align}
    N^m = \mu^m v v^\dagger + SJ^m S^{-1}.
\end{align}
To bound the elements of $J^m$, let $\ell$ be a fixed positive integer, $B_{j} \in \bb{C}^{\ell \times \ell}$ be the matrix that is zero except for $B_{j,i,i+j} = 1$ for $i = 0, \ldots, \ell - j$.
Noting that for any nonnegative integer $a$,
\begin{align}
    B_1^a = \begin{cases}
        B_a & a < \ell \\
        0 & \mbox{otherwise,}
    \end{cases}
\end{align}
we have
\begin{align}
    (\lambda I_\ell + B_1)^m = \sum_{j = 0}^{\ell-1} \binom{m}{j} \lambda^{m-j} B_j.
\end{align}
As all eigenvalues of $SJS^{-1}$ have modulus less than $\epsilon< 1/ 3$, all the entries of $J^m$ are at most $\binom{m}{\xi} \epsilon^{m-\xi}$ where $\xi + 1$ is the dimension of the largest Jordan block of $N - \mu v v^\dagger$, and hence of $N$.
\end{proof}

Fitting the probability of getting all zeros (after de-randomizing the outcome) gives a close approximation to $\nu_{0,0}$.
To learn the other values, we can use the technique from \cite{zhang2024,hines2024}, where we take the expectation value of the outcome-dependent phase in \cref{eq:Phase}.

\begin{theorem}\label{thm:LearnTildes}
    Under ideal gates, the expectation value of
    \begin{align}\label{eq:Phase}
        \prod_{j=1}^m \chi^*_{k_j}(c_j - c_{j+1})
    \end{align}
    for $c_1,\ldots, c_m \in \bb{Z}_d^n$ when $\vec{k}$ is the output of \cref{algo:mainProtocol} is
    \begin{align}\label{eq:ProdLambdaTilde}
    \kappa(\vec{c}) = \tr \left(Z^{-c_1} \rho\right) \prod_{j=1}^m \tilde{\nu}_{c_j, c_{j+1}},
    \end{align}
    where we set $c_{m+1} = 0 $.
\end{theorem}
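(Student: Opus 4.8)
The plan is to evaluate the expectation directly from the measurement statistics of \cref{thm:probK} and then collapse the resulting sum using the closed form for the randomly compiled instrument in \cref{thm:RCI}. Writing the expectation as a sum over derandomized outcome vectors, we have
\begin{align}
\kappa(\vec{c}) = \sum_{\vec{k}} \left[\prod_{j=1}^m \chi^*_{k_j}(c_j - c_{j+1})\right] \dbra{I} \hat{\mc{M}}_{k_m} \cdots \hat{\mc{M}}_{k_1} \dket{\rho},
\end{align}
so the first step is to insert the component form $\hat{\mc{M}}_k = d^{-2n}\sum_{s,t}\chi_k(s-t)\tilde{\nu}_{s,t}\dketbra{Z^t}{Z^s}$ read off from \cref{eq:RCZ}, giving each factor its own pair of summation indices $(s_j,t_j)$.

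The key move is to carry out the sum over each outcome $k_j$ before anything else. Since $k_j$ enters only through $\chi_{k_j}(s_j - t_j)\,\chi^*_{k_j}(c_j - c_{j+1})$, the identity $\chi_a(b)=\chi_b(a)$ together with Schur orthogonality \cref{eq:Schur} collapses the sum to $d^n\,\delta_{s_j - t_j,\,c_j - c_{j+1}}$, pinning the difference $s_j - t_j$ to $c_j - c_{j+1}$ while leaving the overall label free. Equivalently, this replaces each instrument element at step $j$ by the transfer operator
\begin{align}
\sum_k \chi^*_k(c_j - c_{j+1})\hat{\mc{M}}_k = d^{-n}\sum_{s_j - t_j = c_j - c_{j+1}}\tilde{\nu}_{s_j,t_j}\dketbra{Z^{t_j}}{Z^{s_j}}.
\end{align}

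It then remains to evaluate the ordered product of the rank-one operators $\dketbra{Z^{t_j}}{Z^{s_j}}$ between $\dbra{I}$ and $\dket{\rho}$. Adjacent factors meet through the overlaps $\dbraket{Z^{s_j}}{Z^{t_{j-1}}} = \tr Z^{t_{j-1}-s_j} = d^n\delta_{s_j,t_{j-1}}$, which telescope the product and identify $s_j = t_{j-1}$, while the two ends contribute $\dbraket{I}{Z^{t_m}} = d^n\delta_{t_m,0}$ and $\dbraket{Z^{s_1}}{\rho} = \tr Z^{-s_1}\rho$. Feeding the pinned differences, the telescoping identifications, the terminal condition $t_m = 0$, and the convention $c_{m+1}=0$ into one another, the constraints solve uniquely to $t_j = c_{j+1}$ and $s_j = c_j$ for every $j$; the fidelity product then reduces to $\prod_{j=1}^m \tilde{\nu}_{c_j,c_{j+1}}$ and the right boundary becomes $\tr Z^{-c_1}\rho$. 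A final tally confirms the normalization: the $m$ factors $d^{-2n}$ from \cref{eq:RCZ} are cancelled exactly by the $m$ factors $d^n$ from the outcome sums and the $m$ factors $d^n$ from the telescoping overlaps and left boundary.

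I expect the only real obstacle to be organizational rather than conceptual: the argument relies on two distinct index-collapsing mechanisms --- Schur orthogonality from the outcome sums, which fixes the differences $s_j - t_j$, and the Weyl-operator overlaps from the ordered product, which identify $s_j$ with $t_{j-1}$ --- and the care lies in keeping these separate and then checking that the combined chain of constraints, anchored by the boundary conditions at both ends and the convention $c_{m+1}=0$, propagates to the intended solution $s_j = c_j$, $t_j = c_{j+1}$ rather than a shifted one.
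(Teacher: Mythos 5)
Your proposal is correct and follows essentially the same route as the paper's proof: insert the closed form \cref{eq:RCZ} into the expression from \cref{thm:probK}, perform the sum over each $k_j$ first via Schur orthogonality to pin $s_j - t_j = c_j - c_{j+1}$, and then use the Weyl-operator overlaps $\dbraket{Z^a}{Z^b} = d^n\delta_{a,b}$ with the boundary conditions to force $s_j = c_j$, $t_j = c_{j+1}$. The only difference is bookkeeping --- the paper reparameterizes the constrained sum by a single free index $t_j$ and shows all $t_j = 0$, whereas you solve the constraint chain directly --- which is the same argument in different notation.
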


\begin{proof}
    By \cref{thm:probK}, the expectation value of \cref{eq:Phase} under ideal gates is
    \begin{align}
        \kappa(\vec{c}) &= \sum_{\vec{k}} \dbra{I} \left[\prod_{j=m \to 1} \chi^*_{k_j}(c_j- c_{j+1}) \hat{\mc{M}}_{k_j}\right]\dket{\rho} \notag\\
        &= \dbra{I} \left[\prod_{j=m \to 1} \sum_{k_j} \chi^*_{k_j}(c_j-c_{j+1}) \hat{\mc{M}}_{k_j}\right]\dket{\rho}.
    \end{align}
    Substituting in \cref{eq:RCZ} and using \cref{eq:Schur}, we have
    \begin{align}
    \sum_{k_j} &\chi^*_{k_j}(c_j - c_{j+1}) \hat{\mc{M}}_{k_{j}} \notag\\
    &= d^{-2n} \sum_{k_j,s,t} \chi^*_{k_j}(c_j - c_{j+1})\chi_{k_j}(s-t) \tilde{\nu}_{s,t} \dketbra{Z^t}{Z^s} \notag\\
    &= d^{-n} \sum_{t_j} \tilde{\nu}_{t_j + c_j, t_j + c_{j+1}} \dketbra{Z^{t_j+c_{j+1}}}{Z^{t_j+c_j}}.
    \end{align}
    Noting that $I = Z^0$ and $\dbraket{Z^a}{Z^b} = d^n \delta_{a,b}$, we have $t_j = 0$ for all $j$ and thus
    \begin{align}
        \kappa(\vec{c}) &= \dbraket{Z^{c_1}}{\rho} \prod_{j=1}^m \tilde{\nu}_{c_j, c_{j+1}}
    \end{align}
    as claimed.
\end{proof}

We can use \cref{thm:LearnTildes} to learn all the $\tilde{\nu}_{a,b}$ for trace-preserving instruments up to a sign ambiguity as follows.
First recall that $\tilde{\nu}_{0, 0} = 1$, so we only need to learn the other values.
We choose $c_1 = 0$, so that $\tr (Z^{-c_1} \rho) = 1$.
We then note that taking the log of $\kappa(\vec{c})$ will give a sum of the generalized Pauli fidelities with multiplicities.
We can express this in vectorized form by defining the vectors $\vec{N}$ and $\vec{D}(\vec{c})$ where the $j =  a + d^n b$ entries of $\vec{N}$ and of $\vec{D}(\vec{c})$ are $\log \tilde{\nu}_{a, b}$ and the number of times that $\tilde{\nu}_{a, b}$ appears in \cref{eq:ProdLambdaTilde} respectively.
With these vectors, we then have
\begin{align}
    \log \kappa(\vec{c}) = \vec{D}(\vec{c}) \cdot \vec{N}.
\end{align}
Thus if we can choose $\vec{c}^j$ such that the matrix $D$ whose $j$th row is $\vec{D}(\vec{c}^j)$ is invertible, we can learn all the $\tilde{\nu}_{a, b}$ and thus, by \cref{eq:ErrorRates}, all the $\nu_{a, b}$.

Such a choice is generally impossible because certain errors will always appear in matched pairs, making a full characterization impossible.
Specifically, our circuits will not distinguish between a post-measurement register shift in one measurement and a pre-measurement register shift in a subsequent measurement.
Nevertheless, as we illustrate for a single qubit, we can exploit \cref{thm:AllZeros} to gain more information about the noisy instrument.
To fully characterize the noisy instrument, we need to learn $\tilde{\nu}_{a,b}$ for $a,b=0,1$.
As $\tilde{\nu}_{0,0} = 1$ for trace-preserving noise, we only need to learn the other three parameters.
We first choose $c_{2j} = 0$ and $c_{2j+1} = 1$ for $j = 0, \ldots, m-1$ to estimate
\begin{align}
    (\tilde{\nu}_{0,1}\tilde{\nu}_{1,0})^m,
\end{align}
and then fit to an exponential decay to learn $C = \tilde{\nu}_{0,1}\tilde{\nu}_{1,0}$.
We then choose $c_1 = 0$ and $c_j = 1$ for $j = 2, \ldots, m$ to estimate
\begin{align}
    \tilde{\nu}_{0,1}\tilde{\nu}_{1,0} \tilde{\nu}_{1,1}^{m-2},
\end{align}
and again fit to an exponential decay to learn $\tilde{\nu}_{1, 1}$.
We thus have 3 out of 4 required parameters and now use \cref{thm:AllZeros} to estimate $\nu_{0,0} = 1 - \epsilon$.
Then from \cref{eq:ErrorRates}, we have
\begin{align}
    4\nu_{0, 0} &= 1 + \tilde{\nu}_{1, 1} + \tilde{\nu}_{0,1} + \tilde{\nu}_{1,0}.
\end{align}
Rearranging and setting $B = 1 + \tilde{\nu}_{1,1} - 4 \nu_{0,0}$, we have
\begin{align}
    \tilde{\nu}^2_{0,1} + B\tilde{\nu}_{0,1} + C = 0
\end{align}
and so
\begin{align}\label{eq:signAmbiguity}
    \tilde{\nu}_{0,1} = \frac{-B \pm \sqrt{B^2 - 4 C}}{2}.
\end{align}
From the definition of $C$, we then have 
\begin{align}
    \tilde{\nu}_{1, 0} = \frac{-B \mp \sqrt{B^2 - 4 C}}{2},
\end{align}
although we cannot resolve the sign ambiguity.
From \cref{eq:ErrorRates}, the ambiguity between $\tilde{\nu}_{0,1}$ and $\tilde{\nu}_{1,0}$ propagates to an ambiguity between $\nu_{0,1}$ and $\nu_{1,0}$, that is, between whether it is more probable for an error to happen before or after the measurement.
However, the error rates $\nu_{0,0}$ and $\nu_{1,1}$ are unambiguous.
As we now show, this sign ambiguity is fundamental due to a gauge transformation.
If we map $\rho \to \mc{B}(\rho)$ and $\mc{M}_k \to \mc{B} \mc{M}_k \mc{B}^{-1}$ for all $k$, the probability that \cref{algo:mainProtocol} returns $\vec{k}$ is unchanged by \cref{thm:probK}.
Choosing
\begin{align}
    2\mc{B} = \dketbra{I} + \dketbra{X} + \dketbra{Y} + \frac{\tilde{\nu}_{1,0}}{\tilde{\nu}_{0,1}} \dketbra{Z}
\end{align}
swaps $\tilde{\nu}_{0,1}$ and $\tilde{\nu}_{1,0}$ and so no choice of $\vec{c}$ can distinguish between the two.
Thus the above procedure completely characterizes the underlying error rates up to a gauge transformation.

\section{Discussion}
\label{sec:discussion}
We presented a simple protocol to characterize noisy implementations of quantum measurements with feed-forward in the absence of idling qudits. In particular, we showed how to efficiently estimate the diamond distance between a randomly compiled quantum instrument and its ideal form via fitting the probability of no errors to a single exponential decay [\cref{thm:AllZeros}]. We also showed how the foregoing protocol could be supplemented with post-processing to learn (up to signs) the generalized Pauli fidelities of the noisy instrument in question [\cref{thm:LearnTildes}]. Indeed, we proved that the generalized Pauli fidelities are invariant under randomized compiling [\cref{thm:invariantGPFs}]. We derived, moreover, an expression for randomly compiled instruments given in terms of the characters of the Weyl group and the generalized Pauli fidelities [\cref{thm:RCI}]. Lastly, we derived an eigenvalue bound for nonnegative matrices that may be of independent interest [\cref{thm:ApproximateProjector}]. 

As we proved, our protocol provides an accurate estimate of the error rate.
The inexactness of the estimate is due to second-order errors where, for example, errors in the post-measurement state of one measurement cancel out classification errors in the next measurement.
These two error processes have equivalent effects on our protocol and are related by a gauge transform, resulting in the sign ambiguity in the generalized Pauli fidelities in \cref{eq:signAmbiguity}.
This gauge degree of freedom could likely be removed in a multi-qubit setting for sufficiently local errors by applying CNOT gates~\cite{PhysRevResearch.3.033285}, however, we leave this for future work.

While our protocol is simple, classical feed-forward is required to randomly compile the measurements. Randomly compiling the measurements is critical to achieving a meaningful characterization.
Without randomized compiling, the survival probability for $m$ measurements would still approximately decay exponentially at a rate close to the probability of no error conditioned on the initial state being the target state. 
However, this conditional probability is of limited usefulness for quantum error correction because a significant fraction of error syndromes should be observed.
Performing randomized compiling makes the error rate independent of the input state, which makes the error in, \textit{e.g.}, a syndrome measurement independent of the underlying syndrome and (approximately) equal to the error rate obtained using our protocol.
 
\section{Acknowledgments}
MG happily acknowledges helpful discussions with Stephen Vintskevich. This research was supported by the U.S. Army Research Office through grants W911NF-21-10007 and W911NF-20-S-0004, the Canada First Research Excellence Fund, the Government of Ontario, and the Government of Canada through NSERC.

\appendix

\section{Eigenvalue bound for nonnegative matrices}

\begin{figure}[b]
\begin{tikzpicture}
\draw[dashed] (0.8,0.6) node[below] {$A_{i,i}$} circle (0.6);
\draw[>=triangle 45, <->] (0.8,0.6) node[above] {$r_i(A)$} -- (1.4,0.6);
\draw[dashed] (0,0) circle (1.6);
\draw[dashed] (4.5,0) node[below] {$A_{0,0}$} circle (1.5);
\draw[>=triangle 45, <->] (0,0) -- node[below] {$R_i(A)$} (1.6,0);
\draw[>=triangle 45, <->] (4.5,0) -- node[right] {$r_0(A)$} (4.5,1.5);
\draw[->] (0,-2.5) -> (0,2.5);
\draw[->] (-2,0) -> (6.5,0);
\end{tikzpicture}
  \caption{The Gershgorin discs encircling $A_{i,i}$ and $A_{0,0}$ for a matrix $A$ with nonnegative elements dominated by $A_{0,0}$.
  The separation between the discs is at least $A_{0,0} - r_0(A) - R_i(A)$.}
  \label{fig:discSeparation}
\end{figure}
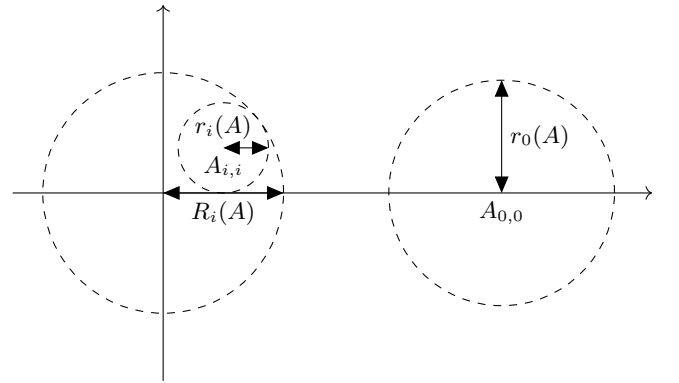

We prove an eigenvalue bound for nonnegative matrices that may be of independent interest.
To prove these bounds, we use Gershgorin discs and Brauer's eigenvalue bound.
The Gershgorin discs for a matrix $A$ are the sets
\begin{align}
    D_i(A) &= \{z: |z - A_{i, i}| \leq r_i(A)\}
\end{align}
where
\begin{align}
    r_i(A) = \sum_{j \neq i} |A_{i,j}|.
\end{align}
Informally, the Gershgorin disc theorem states that the eigenvalues of a matrix are located within discs centered on the diagonals.

\begin{theorem}[Gershgorin disc theorem \cite{Gershgorin31,horn2012matrix}]\label{thm:Gershgorin}
    Let $A \in \bb{C}^{\ell \times \ell}$ and $S \subseteq \bb{Z}_\ell$ be such that $\cup_{i \in S} D_i(A)$ and $\cup_{i\notin S} D_i(A)$ are disjoint.
    Then $\cup_{i \in S} D_i(A)$ contains $|S|$ eigenvalues of $A$ and $\cup_{i \notin S} D_i(A)$ contains $l-|S|$ eigenvalues of $A$.
\end{theorem}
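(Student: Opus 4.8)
The plan is to prove the statement in two stages: first establish the weaker localization fact that every eigenvalue lies in the union of all the discs, and then upgrade this to the counting statement by a continuity argument that interpolates between $A$ and its diagonal part.

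For the localization step, I would take an eigenvalue $\lambda$ with eigenvector $v \neq 0$ and choose an index $i$ maximizing $|v_i|$. Reading off the $i$th component of $Av = \lambda v$ gives $(\lambda - A_{i,i}) v_i = \sum_{j \neq i} A_{i,j} v_j$. Dividing by $v_i$, which is nonzero by the maximality choice, and applying the triangle inequality together with $|v_j| \leq |v_i|$ yields $|\lambda - A_{i,i}| \leq \sum_{j \neq i} |A_{i,j}| = r_i(A)$, so that $\lambda \in D_i(A)$. Hence every eigenvalue sits in at least one disc.

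For the counting step, I would introduce the family $A(t) = \Lambda + t(A - \Lambda)$ for $t \in [0,1]$, where $\Lambda$ is the diagonal part of $A$. Since the off-diagonal entries scale by $t$, we have $r_i(A(t)) = t\, r_i(A) \leq r_i(A)$ and therefore $D_i(A(t)) \subseteq D_i(A)$ for every $t$. In particular the two unions $\cup_{i\in S} D_i(A(t))$ and $\cup_{i\notin S} D_i(A(t))$ remain disjoint for all $t$, being contained in the disjoint unions at $t = 1$. At $t = 0$ the matrix is diagonal, so its eigenvalues are exactly the centers $A_{i,i}$, placing precisely $|S|$ of them (counted with multiplicity) inside $\cup_{i\in S} D_i(A(0))$. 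The eigenvalues of $A(t)$ are the roots of a characteristic polynomial whose coefficients depend polynomially, hence continuously, on $t$, so they trace out continuous curves. Because these curves can never bridge the gap between the two separated regions, the number of eigenvalues in $\cup_{i\in S} D_i(A(t))$ cannot change as $t$ runs from $0$ to $1$; evaluating at $t = 1$ gives the claim.

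The hard part will be making the continuity-of-eigenvalues argument rigorous: one must invoke that the roots of a monic polynomial vary continuously with its coefficients and argue that no eigenvalue can migrate between the two closed regions without crossing the separating gap, which is impossible precisely because the regions stay disjoint for every $t$. This is the only nonelementary input; the localization bound and the disc-containment observation are entirely routine. As the result is classical, I would lean on the standard references for the continuity statement rather than reproving it here.
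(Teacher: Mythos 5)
The paper gives no proof of this theorem: it is imported as a classical result, with the citations to Gershgorin and to Horn--Johnson standing in for the argument. Your proposal is exactly the standard proof found in those references (maximal-component localization, then the homotopy $A(t)=\Lambda+t(A-\Lambda)$ whose discs shrink as $D_i(A(t))\subseteq D_i(A)$, so the two unions stay disjoint and continuity of the roots of the characteristic polynomial forbids eigenvalues from crossing between them), and it is correct, including your identification of the continuity-of-roots step as the only nonelementary ingredient.
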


While useful, the Gershgorin disc theorem is not sufficiently tight to obtain our desired bound on the dominant eigenvalue.
To strengthen the bound, we can use the ovals of Cassini
\begin{align}\label{eq:CasiniOvals}
    K_{i,j}(A) &= \{ z : |z - A_{i,i}| |z - A_{j,j}| \leq r_i(A) r_j(A) \}.
\end{align}

\begin{theorem}[Brauer's eigenvalue bound~\cite{Brauer47,horn2012matrix}]\label{thm:Brauer}
    The eigenvalues of a matrix $A \in \bb{C}^{\ell \times \ell}$ are contained within $\cup_{0 \leq i < j < \ell} K_{i,j}(A)$.
\end{theorem}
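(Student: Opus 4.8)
The plan is to adapt the classical proof of \cref{thm:Brauer} based on the two largest entries of an eigenvector. Let $\lambda$ be any eigenvalue of $A$ with eigenvector $x \neq 0$ (noting $\ell \geq 2$, else the union is empty and there is nothing to prove). First I would select indices $p, q \in \bb{Z}_\ell$ with $p \neq q$ for which $|x_p|$ is the largest and $|x_q|$ the second-largest among the magnitudes of the entries of $x$, so that $|x_p| \geq |x_q| \geq |x_j|$ for every $j \notin \{p, q\}$. Since the defining inequality of $K_{i,j}(A)$ in \cref{eq:CasiniOvals} is symmetric under interchange of $i$ and $j$, it suffices to place $\lambda$ in $K_{p,q}(A)$ regardless of the order of $p$ and $q$.

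Next, I would write out the $p$-th and $q$-th components of $Ax = \lambda x$ as
\begin{align}
(\lambda - A_{p,p}) x_p = \sum_{j \neq p} A_{p,j} x_j, \qquad (\lambda - A_{q,q}) x_q = \sum_{j \neq q} A_{q,j} x_j.
\end{align}
Applying the triangle inequality and bounding each $|x_j|$ with $j \neq p$ by $|x_q|$ in the first relation, and each $|x_j|$ with $j \neq q$ by $|x_p|$ in the second, yields
\begin{align}
|\lambda - A_{p,p}|\,|x_p| \leq r_p(A)\,|x_q|, \qquad |\lambda - A_{q,q}|\,|x_q| \leq r_q(A)\,|x_p|.
\end{align}
Multiplying these two bounds gives $|\lambda - A_{p,p}|\,|\lambda - A_{q,q}|\,|x_p|\,|x_q| \leq r_p(A)\,r_q(A)\,|x_p|\,|x_q|$, and whenever $|x_p|\,|x_q| > 0$ I may cancel the common positive factor to conclude that $\lambda$ satisfies \cref{eq:CasiniOvals}, i.e.\ $\lambda \in K_{p,q}(A)$, as desired.

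The hard part is the degenerate case $x_q = 0$, in which the cancellation above is invalid. Here, since $q$ indexes the second-largest magnitude, every $|x_j|$ with $j \neq p$ vanishes, so $x = x_p e_p$ with $x_p \neq 0$. The $p$-th component of $Ax = \lambda x$ then reduces to $A_{p,p} x_p = \lambda x_p$, forcing $\lambda = A_{p,p}$; hence $|\lambda - A_{p,p}| = 0$ and the defining inequality of $K_{p,j}(A)$ holds trivially for every $j \neq p$, of which at least one exists because $\ell \geq 2$. Apart from isolating this edge case, the entire argument is a rowwise application of the triangle inequality, so I anticipate no further obstacles.
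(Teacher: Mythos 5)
The paper does not actually prove \cref{thm:Brauer}; it is stated as a known result with citations to Brauer~\cite{Brauer47} and Horn and Johnson~\cite{horn2012matrix}, so there is no in-paper proof to compare against. Your argument is correct and is essentially the classical proof given in those references: isolate the two largest-magnitude entries $x_p, x_q$ of an eigenvector, derive the two row inequalities from \cref{eq:CasiniOvals}'s defining data, multiply and cancel, and handle the degenerate case $x_q = 0$ by noting the eigenvector is then supported on the single coordinate $p$, which forces $\lambda = A_{p,p}$ and places $\lambda$ trivially in every $K_{p,j}(A)$. One pedantic remark: your parenthetical about $\ell = 1$ is phrased as ``nothing to prove,'' but in that case the union is empty while $A$ still has an eigenvalue, so the statement is false rather than vacuous; the correct reading is that the theorem presupposes $\ell \geq 2$, which is what your proof in effect assumes.
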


When one of the Gershgorin discs is well-separated from the other Gershgorin discs, we can use \cref{thm:Brauer} to tighten the Gershgorin disc theorem.
To formalize this, we quantify the separation between Gershgorin discs by defining
\begin{align}\label{eq:CasiniSeparation}
    k_{i,j}(A) = \min_{u \in D_i(A), v \in D_j(A)} | u - v |.
\end{align}

\begin{theorem}\label{thm:GershgorinBrauer}
    Let $A \in \bb{C}^{\ell \times \ell}$ be such that $r_i(A) < k_{0,i}(A)$ for all $i > 0$.
    Then there exists a unique eigenvalue $\mu$ of $A$ satisfying
    \begin{align*}
        |\mu - A_{0,0}| \leq \max_{i > 0} \frac{r_0(A) r_i(A)}{k_{0,i}(A)}.
    \end{align*}
\end{theorem}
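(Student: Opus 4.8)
The plan is to first use the Gershgorin disc theorem to isolate a single eigenvalue inside the disc $D_0(A)$, and then to sharpen the crude estimate $|\mu - A_{0,0}| \le r_0(A)$ down to the claimed bound by invoking Brauer's ovals of Cassini. The whole argument hinges on one geometric observation about the separation quantity $k_{0,i}(A)$.

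First I would observe that the hypothesis $r_i(A) < k_{0,i}(A)$ forces $k_{0,i}(A) > 0$ for every $i > 0$, so $D_0(A)$ is disjoint from each $D_i(A)$. Applying \cref{thm:Gershgorin} with $S = \{0\}$ then shows that $D_0(A)$ contains exactly one eigenvalue $\mu$, while the remaining $\ell - 1$ eigenvalues lie in $\cup_{i>0} D_i(A)$. This simultaneously gives the existence of an eigenvalue near $A_{0,0}$ and, since the target region $\{z : |z - A_{0,0}| \le \max_{i>0} r_0(A) r_i(A)/k_{0,i}(A)\}$ sits strictly inside $D_0(A)$ (precisely because $r_i(A)/k_{0,i}(A) < 1$ by hypothesis), the uniqueness of any eigenvalue obeying the bound.

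The substantive part is to show that this $\mu$ actually lands in the smaller region. The key geometric estimate is that every point $z \in D_0(A)$ satisfies $|z - A_{i,i}| \ge k_{0,i}(A)$ for all $i>0$, since $A_{i,i}$ lies in $D_i(A)$ and $k_{0,i}(A)$ is by definition the minimal distance between the two discs. By \cref{thm:Brauer}, $\mu$ lies in some oval $K_{p,q}(A)$ with $p<q$. The crux, and the step I expect to be the main obstacle, is ruling out the case $p,q>0$: if both indices were positive, the estimate would give $|\mu - A_{p,p}|\,|\mu - A_{q,q}| \ge k_{0,p}(A) k_{0,q}(A) > r_p(A) r_q(A)$, where the strict inequality is exactly where the hypothesis $r_i(A) < k_{0,i}(A)$ is consumed; this contradicts $\mu \in K_{p,q}(A)$. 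Hence $0 \in \{p,q\}$, so $\mu \in K_{0,i}(A)$ for some $i>0$.

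Finally, from $\mu \in K_{0,i}(A)$ I get $|\mu - A_{0,0}|\,|\mu - A_{i,i}| \le r_0(A) r_i(A)$, and dividing by the lower bound $|\mu - A_{i,i}| \ge k_{0,i}(A) > 0$ yields $|\mu - A_{0,0}| \le r_0(A) r_i(A)/k_{0,i}(A) \le \max_{j>0} r_0(A) r_j(A)/k_{0,j}(A)$, as required; the degenerate case $\ell = 1$ is immediate since then $\mu = A_{0,0}$. I anticipate the only delicate points to be careful bookkeeping with $k_{0,i}(A)$ and verifying that the single hypothesis $r_i(A) < k_{0,i}(A)$ is exactly what is needed both for the Brauer exclusion and for nesting the target region inside $D_0(A)$.
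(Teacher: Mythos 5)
Your proposal is correct and takes essentially the same route as the paper: Gershgorin's theorem isolates a single eigenvalue $\mu$ in $D_0(A)$, and Brauer's ovals of Cassini together with the separation estimate $|\mu - A_{j,j}| \geq k_{0,j}(A)$ shrink the crude bound $r_0(A)$ to the claimed one. The only cosmetic difference is how the Cassini oval is forced to involve index $0$: you exclude the case $p, q > 0$ by a direct product inequality, whereas the paper deduces $\mu \in D_i(A)$ from $r_j(A) < k_{0,j}(A)$ and invokes disjointness of the discs — both are the same underlying use of the hypothesis.
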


\begin{proof}
As $r_i(A) < k_{0,i}(A)$ for all $i > 0$, $D_0(A)$ is disjoint from the other Gershgorin discs and so $D_0(A)$ contains one eigenvalue $\mu$ by \cref{thm:Gershgorin}.
That is,
\begin{align}
    |\mu - A_{0,0}| \leq r_0(A).
\end{align}
To complete the proof, we need to shrink the disc by a factor $r_i(A) / k_{0,i}(A)$, which is less than 1 by assumption.
By \cref{thm:Brauer}, there exists some $0 \leq i < j < \ell$ such that $\mu \in K_{i,j}(A)$, that is,
\begin{align}
    |\mu - A_{i,i}| \leq r_i(A) \frac{r_j(A)}{|\mu - A_{j,j}|} \leq \frac{r_i(A) r_j(A)}{k_{0,j}(A)}.
\end{align}
As $r_j(A) < k_{0,j}(A)$ by assumption, we have $\mu \in D_i(A)$ and so $i = 0$ as $\mu \in D_0(A)$ and $D_0(A)$ is disjoint from the other Gershgorin discs.
\end{proof}

We now specialize \cref{thm:GershgorinBrauer} to the case where the entries of $A$ form a probability distribution and $A_{0,0} \approx 1$.

\begin{theorem}\label{thm:ApproximateProjector}
    Let $\epsilon \in (0, 1/3)$ and $A \in \bb{C}^{\ell \times \ell}$ be a matrix with nonnegative entries such that $A_{0,0} = 1 - \epsilon$ and $\sum_{i,j} A_{i,j} = 1$.
    Then there is a unique maximal eigenvalue $\mu$ satisfying
    \begin{align*}
        |\mu - A_{0,0}| \leq \frac{\epsilon^2}{1 - 2\epsilon},
    \end{align*}
    while all other eigenvalues $\lambda$ satisfy $|\lambda| \leq \epsilon$. That is, $\mu = A_{0,0} + O(\epsilon^2)$.
\end{theorem}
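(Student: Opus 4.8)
The plan is to apply \cref{thm:GershgorinBrauer} with the index $0$ playing the role of the dominant diagonal entry, after translating the single global constraint $\sum_{i,j} A_{i,j} = 1$ into bounds on the individual Gershgorin radii. Because every entry is nonnegative, $r_i(A) = \sum_{j \neq i} A_{i,j}$ is simply the off-diagonal row sum, so the total-mass constraint reads $A_{0,0} + r_0(A) + \sum_{i>0}(A_{i,i} + r_i(A)) = 1$. Substituting $A_{0,0} = 1 - \epsilon$ gives the budget identity $r_0(A) + \sum_{i>0}(A_{i,i} + r_i(A)) = \epsilon$, from which I read off $r_0(A) \leq \epsilon$ and, for each $i>0$, the full row-mass bound $A_{i,i} + r_i(A) \leq \epsilon - r_0(A) \leq \epsilon$ (in particular $r_i(A) \leq \epsilon$).

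First I would establish the coarse eigenvalue statement. Since the centres $A_{0,0}$ and $A_{i,i}$ are real and nonnegative, each disc $D_i(A)$ with $i>0$ lies in $\{z : |z| \leq A_{i,i} + r_i(A)\} \subseteq \{z : |z| \leq \epsilon\}$, whereas every $z \in D_0(A)$ obeys $|z| \geq (1-\epsilon) - r_0(A) \geq 1 - 2\epsilon$. The hypothesis $\epsilon < 1/3$ gives $1 - 2\epsilon > \epsilon$, so $D_0(A)$ is disjoint from the union of the remaining discs; \cref{thm:Gershgorin} then places exactly one eigenvalue $\mu$ in $D_0(A)$ and the remaining $\ell - 1$ eigenvalues in $\cup_{i>0}D_i(A) \subseteq \{z : |z|\leq\epsilon\}$. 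Since $|\mu| \geq 1 - 2\epsilon > \epsilon$, this $\mu$ is the unique eigenvalue of maximal modulus, which proves the bound $|\lambda| \leq \epsilon$ for all other eigenvalues.

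Next I would sharpen the estimate on $\mu$ using \cref{thm:GershgorinBrauer}. With real centres the disc separation is $k_{0,i}(A) = (1-\epsilon) - A_{i,i} - r_0(A) - r_i(A)$, and the budget identity yields $A_{i,i} + r_i(A) + r_0(A) \leq \epsilon$, hence $k_{0,i}(A) \geq 1 - 2\epsilon$ for every $i>0$. The hypothesis of \cref{thm:GershgorinBrauer} is then immediate, since $r_i(A) \leq \epsilon < 1 - 2\epsilon \leq k_{0,i}(A)$. Applying that theorem and bounding crudely via $r_0(A), r_i(A) \leq \epsilon$ together with $k_{0,i}(A) \geq 1 - 2\epsilon$ gives
\begin{align*}
    |\mu - A_{0,0}| \leq \max_{i>0}\frac{r_0(A)\, r_i(A)}{k_{0,i}(A)} \leq \frac{\epsilon^2}{1 - 2\epsilon},
\end{align*}
which is the claimed inequality, and the final asymptotic $\mu = A_{0,0} + O(\epsilon^2)$ follows immediately.

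The argument is essentially bookkeeping, so the only delicate point --- and the step I would treat most carefully --- is converting the single aggregate constraint $\sum_{i,j}A_{i,j}=1$ into the per-disc separation bound $k_{0,i}(A) \geq 1 - 2\epsilon$. The crucial observation is that the off-diagonal radius $r_i(A)$ and the diagonal offset $A_{i,i}$ for each $i>0$ draw from the same $\epsilon$-budget that also controls $r_0(A)$, so they cannot simultaneously be large enough to pull $D_i(A)$ close to $D_0(A)$. It is also worth confirming that the hypothesis $\epsilon < 1/3$ is used in exactly two places, namely the disjointness $1 - 2\epsilon > \epsilon$ and the verification of the \cref{thm:GershgorinBrauer} hypothesis $r_i(A) < k_{0,i}(A)$, which is what makes the threshold $1/3$ natural.
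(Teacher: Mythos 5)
Your proposal is correct and follows essentially the same route as the paper's proof: both convert the mass constraint $\sum_{i,j} A_{i,j} = 1$ into the bounds $r_0(A), r_i(A) \leq \epsilon$ and $k_{0,i}(A) \geq 1 - 2\epsilon$, verify the hypothesis of \cref{thm:GershgorinBrauer} using $\epsilon < 1/3$, and then combine that theorem with \cref{thm:Gershgorin} to isolate $\mu$ and bound the remaining eigenvalues by $\epsilon$. The only cosmetic difference is that you compute the disc separation $k_{0,i}(A)$ exactly from the real nonnegative centres, whereas the paper lower-bounds it via the origin-centred disk of radius $R_i(A) = A_{i,i} + r_i(A)$ containing $D_i(A)$, which yields the identical estimate.
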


\begin{proof}
For any matrix $A$ let $R_i(A)$ be the 1-norm of the $i$-th row of $A$, that is
\begin{align}
    R_i(A) = |A_{i, i}| + r_i(A).
\end{align}
Further, let $B(R_i(A))$ be the disk centered at the origin of radius $R_i(A)$. We thus trivially have $D_i(A) \subseteq B(R_i(A))$.
Since the separation between $D_0(A)$ and $B(R_i(A))$ is smaller than between $D_0(A)$ and $D_i(A)$, it follows that
\begin{align}
    k_{0,i}(A) \geq A_{0,0} - r_0(A) - R_i(A),
\end{align}
 where we note that $A_{0,0}>0$ by assumption. This is illustrated in \cref{fig:discSeparation}.
As $\sum_{i,j} A_{i,j} = 1$, we thus have $k_{0, i}(A) \geq 1 - 2\epsilon$ and $r_i(A) \leq \epsilon$, so that for $\epsilon \in (0, 1/3)$ we have $r_i(A) < k_{0, i}(A)$.
Thus, by \cref{thm:GershgorinBrauer},
\begin{align}
        |\mu - A_{0,0}| \leq \frac{\epsilon^2}{1 - 2\epsilon}.
\end{align}
By \cref{thm:Gershgorin}, all other eigenvalues $\lambda$ satisfy $|\lambda| \leq \epsilon$.
\end{proof}

\end{document}